\documentclass[letterpaper]{article}

\usepackage[export]{adjustbox}
\usepackage{algorithm}
\usepackage{algorithmicx,algpseudocode}
\usepackage{amsmath,amssymb,amsfonts,amsthm}
\usepackage{array}
\usepackage{bibunits}
\usepackage{booktabs}
\usepackage{caption}
\usepackage{cite}
\usepackage{colortbl}
\usepackage{csvsimple}
\usepackage{etoolbox}
\usepackage[T1]{fontenc} 
\usepackage[margin=1in]{geometry}
\usepackage{graphicx}
\usepackage{glossaries,glossaries-extra}
\usepackage[hypertexnames=false]{hyperref}
\usepackage{cleveref}
\usepackage{ifthen}
\usepackage{import}
\usepackage{listings}
\usepackage{longtable}
\usepackage{makecell}
\usepackage{makecmds}
\usepackage{multirow}
\usepackage{natbib}
\usepackage[section]{placeins}
\usepackage{pythonhighlight}
\usepackage{rotating}
\usepackage[moderate]{savetrees}
\usepackage[mark=***]{sectionbreak}
\usepackage{setspace}
\usepackage{siunitx}
\usepackage{stringstrings}
\usepackage{subcaption}
\usepackage{subcaption}
\usepackage{tabularx}
\usepackage{textcomp}
\usepackage{vcell}
\usepackage{xcolor}
\usepackage{xfp}
\usepackage{xparse}
\usepackage{xstring}

\newcommand{\ceil}[1]{\left\lceil #1 \right\rceil}

\makeatletter
\let\pragma@iinput=\@iinput
\def\@iinput#1{\xdef\@pragmafile{#1}\pragma@iinput{#1} }
\def\@pragmafile{default}
\def\pragmaonce{%
   \csname pragma@\@pragmafile\endcsname
   \global\expandafter\let \csname pragma@\@pragmafile\endcsname =  
}
\makeatother

\ifdefined\mydraft
\mydraft
\fi

\defaultbibliography{bibl}
\defaultbibliographystyle{apalike}
\usepackage[moderate]{savetrees}
\begin{document}

\title{ Algorithms for Efficient, Compact Online Data Stream Curation }
\author{
    Matthew Andres Moreno\textsuperscript{1,2,3}\thanks{Corresponding author: \texttt{morenoma@umich.edu}} \quad
    Santiago Rodriguez Papa\textsuperscript{4} \quad
    Emily Dolson\textsuperscript{4,5} \quad
}
\date{}

\newcommand{\affil}[1]{\textsuperscript{#1}}
\newcommand{\affiliations}{
\affil{1} Ecology and Evolutionary Biology, University of Michigan, Ann Arbor, United States \\
\affil{2} Center for the Study of Complex Systems, University of Michigan, Ann Arbor, United States \\
\affil{3} Michigan Institute for Data Science, University of Michigan, Ann Arbor, United States
\affil{4} Department of Computer Science and Engineering, Michigan State University, East Lansing, United States \\
\affil{5} Ecology, Evolution, and Behavior, Michigan State University, East Lansing, United States \\
}

\maketitle

\begin{center}
\affiliations
\end{center}

\begin{abstract}
Data stream algorithms tackle operations on high-volume sequences of read-once data items.
Data stream scenarios include inherently real-time systems like sensor networks and financial markets.
They also arise in purely-computational scenarios like ordered traversal of big data or long-running iterative simulations.
In this work, we develop methods to maintain running archives of stream data that are temporally representative, a task we call ``stream curation.''
Our approach contributes to rich existing literature on data stream binning, which we extend by providing stateless (i.e., non-iterative) curation schemes that enable key optimizations to trim archive storage overhead and streamline processing of incoming observations.
We also broaden support to cover new trade-offs between curated archive size and temporal coverage.
We present a suite of five stream curation algorithms that span $\mathcal{O}(n)$, $\mathcal{O}(\log n)$, and $\mathcal{O}(1)$ orders of growth for retained data items.
Within each order of growth, algorithms are provided to maintain even coverage across history or bias coverage toward more recent time points.
More broadly, memory-efficient stream curation can boost the data stream mining capabilities of low-grade hardware in roles such as sensor nodes and data logging devices.
\end{abstract}

\begin{bibunit}

\begin{abstract}
Phylogenetic analyses can also enable insight into evolutionary and ecological dynamics such as selection pressure and frequency dependent selection in digital evolution systems.
Traditionally, digital evolution systems have recorded data for phylogenetic analyses through perfect tracking where each birth event is recorded in a centralized data structures.
This approach, however, does not easily scale to distributed computing environments where evolutionary individuals may migrate between a large number of disjoint processing elements.
To provide for phylogenetic analyses in these environments, we propose an approach to infer phylogenies via heritable genetic annotations rather than directly track them.
We introduce a ``hereditary stratigraphy'' algorithm that enables efficient, accurate phylogenetic reconstruction with tunable, explicit trade-offs between annotation memory footprint and reconstruction accuracy.
This approach can estimate, for example, MRCA generation of two genomes within 10\% relative error with 95\% confidence up to a depth of a trillion generations with genome annotations smaller than a kilobyte.
We also simulate inference over known lineages, recovering up to 85.70\% of the information contained in the original tree using a 64-bit annotation.
\end{abstract}

\section{Introduction} \label{sec:introduction}

Absent indefinite storage capacity, any piece of incoming streaming data must eventually be either evicted or digested if space is to be made available for new input \citep{gaber2005mining}.
This constraint is a crucial consideration in algorithm design for data streams, scenarios involving read-once inputs available only in a strictly ordered sequence.
Such streams' ordering may be dictated by inherently real-time processes (e.g., sensor readings) or retrieval limitations of storage media (e.g., a tape archive) \citep{henzinger1998computing}.
The data streaming model assumes input greatly exceeds memory capacity, with many analyses simply treating streams as unbounded \citep{jiang2006research}.

Data streaming scenarios pervade domains across science and industry \citep{aggarwal2009data,akidau2015dataflow}.
Commercial application areas include sensor networks \citep{elnahrawy2003research}, big-data analytics \citep{he2010comet}, real-time network traffic analysis \citep{johnson2005streams,muthukrishnan2005data}, systems administration \citep{fischer2012real}, and financial analytics for fraud prevention and algorithmic trading \citep{rajeshwari2016real,agarwal2009faster}.
Notable scientific applications arise in environmental/climate monitoring \citep{hill2009real} and astronomy \citep{graham2012data}.
Purely-programmatic computation can also behave as a data stream --- iterative simulation processes traverse vast expanses of ephemeral intermediate state that must be traced to verify simulation dynamics and assess simulation outcomes \citep{abdulla2004simulation,schutzel2014stream}.

Indeed, this broad utility has begat an extensive corpus of data stream algorithms.
Common objectives include rolling summary statistic calculations \citep{lin2004continuously}, on-the-fly data clustering \citep{silva2013data}, live anomaly detection \citep{cai2004maids}, and rolling event frequency estimation \citep{manku2002approximate}.
Data stream algorithms typically draw on one or more of three key stratagems: (1) rolling mechanisms, which restrict consideration to a FIFO tranche of recent data, (2) accumulation, which successively folds data into a summary statistic (e.g., sum, count, etc.) where data is repeatedly applied to a fixed amount of memory or resources, and (3) binning, which consolidates data within time interval bins to create a coarsened record.

Here, we focus on the third stratagem, binning.
Specifically, we develop efficient procedures to maintain temporally-representative subsamples of a data stream on a rolling basis.
That is, to read sequential observations from a data stream on an ongoing basis and sequence their disposal to maintain a record of data stream observations.

We term the rolling management of samples subsetted from a data stream as ``stream curation.''
Proposed algorithms span several possible requirements for two curatorial properties: (1) ``order of growth'' --- how curated collection size should grow in proportion to stream depth and (2) ``gap size bounds'' --- how retained samples should be spaced across stream history.
These considerations arise in various combinations across existing work \citep{aggarwal2003framework,han2005stream}, reviewed in detail later on;
here, we systematize these curatorial properties and contribute novel curatorial policy implementations distinguished by efficiency.
Each contributed policy includes indexing schemes that simultaneously support both efficient update operations and efficient storage of retained stream values in a flat array, requiring only $O(1)$ storage overhead --- a single counter value.

Although we do not treat it directly here, the original motivating application for contributed stream curation algorithms is ``hereditary stratigraphy,'' a recently-developed technique for distributed tracking of copy trees among replicating digital artifacts \citep{moreno2022hereditary}.
Applications of such tracking include phylogenetic analysis of highly-distributed genetic algorithms and evolutionary simulations, and provenance analysis of decentralized social network content, peer-to-peer file sharing, and computer viruses.
A brief description of hereditary stratigraphy is instructive to the timbre of algorithms contributed here.

\subsection{Hereditary Stratigraphy}

In order to reconstruct histories of relatedness, hereditary stratigraphy annotates replicating artifacts with a record of checkpoint fingerprints that grows by accretion with each replication event.
Comparing two artifacts' checkpoint records tells the extent of their common ancestry, as annotations will share common fingerprints up through the time of their last common ancestor and then differ.

Considering generational fingerprint records as a data stream, hereditary stratigraphy applies binning techniques to manage fingerprint accretion --- paring down retained fingerprints while maintaining checkpoints spaced across generations back to the progenitor artifact.
In the context of hereditary stratigraphy, stream curation decides how annotation size scales with generations elapsed by controlling how many retained strata accumulate.
Stream curation decisions directly influence capability for ancestry inference, because the onset of lineage divergence can only be discerned where fingerprints are retained.
Requirements on space usage and inferential power differ substantially between use cases of hereditary stratigraphy, so flexible support for a variety of record size/inferential power trade-offs is crucial.

Of particular note, however, is hereditary stratigraphy's necessity for compact representation of fingerprint records.
Because reduced fingerprint size allows more fingerprints to be retained, typical use will take fingerprints as individual bits, or possibly bytes (to avoid addressability complications).
In this context, representational overhead incurred, e.g., by explicitly storing fingerprints' individual stream sequence indices, can easily bloat annotations' footprint severalfold.
For some use cases, annotated artifacts will number millions or higher, so annotation inefficiency may substantially burden memory, storage, and network bandwidth (i.e., serialized artifact-annotation exchange).

Here, however, we present these algorithmic foundations developed for hereditary stratigraphy in the more generalized frame of data stream processing.
We describe a suite of indexing schemes for stream curation that support (1) linear, logarithmic, and constant scaling relationships between record size and generations elapsed and (2) both even-time and recency-biased distributions of retained stream items.
Implementations provided for each drop representational overhead for curated stream data to a single counter value.
Presented algorithms are published through the \texttt{hstrat} Python package for hereditary stratigraphy \citep{moreno2022hstrat}, but can be directly accessed through public APIs fully independent of other aspects of hereditary stratigraphy methodology.

To provide further introduction to key concepts behind stream curation, the next sections situate our proposed stream curation procedures within existing data stream literature and consider applications of stream curation data loggers and sensor networks.

\subsection{Stream Curation} \label{sec:streaming-curation}

Under an iterative model, the passing of time operates something like a ``first-in, nothing-out'' queue --- successive time steps simply pile on ad infinitum.
As time accumulates, each elapsed time step recedes ever deeper.
A discrete event's absolute time point does not change, but its relation to the present does.  
This inevitability is crux to the ``stream curation'' problem, which we establish to describe rolling maintenance of a temporally representative cross-section of data stream observations.

Stream curation algorithms must answer how many observations should be kept at any point in time, but also how observations that are retained should be spaced out over past time.
Appropriate choices vary by use case, and no stream curation policy can meet all possible demands.
For this reason, we consider a spectrum across two factors: size limitation, i.e., how many observations may be retained, and resolution guarantees, i.e., maximum gap sizes.
For each space-vs-resolution trade-off explored, we provide an implementation algorithm meeting criteria of computational reducibility and self-consistency, defined below.
We first discuss stratum curation policy trade-off stipulation criteria and policy implementation for stream curation, then close with connections of stream curation to existing work and potential applications.

\subsubsection{Stream Curation Policy Stipulation}

Two primary dimensions of stream curation policy matter for practical purposes: 1) retained collection size and 2) time gap sizes between retained observations.

We bound retained collection size to a fixed value or a function of time elapsed.
Asymptotic bounds on the scaling relationship between collection size and time are the ``size order of growth.''
In some cases, we also define hard bounds, referred to as a ``size cap,'' in light of practical considerations; for example, a user may have a fixed size memory allocation in which to store a curated collection.

Bounds on spacing between retained observations, a ``resolution guarantee,'' should depend on both the elapsed time and the temporal depth of a particular observation.
Taking into account historical depth allows skew in retained observation density.
For instance, observations may be retained at evenly-spaced time points or, alternately, thinned proportionately to historical depth.
The latter approach biases observational detail to recent time, which may be important in some use cases.

For example, in the context of hereditary stratigraphy, recency-proportional resolution is typically preferable.
Coalescent theory predicts a tendency for evolution-like processes to produce phylogenies with many recent branches and progressively fewer ancient branches \citep{nordborgCoalescentTheory2019, berestyckiRecentProgressCoalescent2009}.
Thus, fine inferential detail over recent time points usually proves more informative to phylogenetic reconstruction than detail over more ancient time points.
Indeed, trials reconstructing known lineages have found that recency-skewing retention provides better quality reconstructions \citep{moreno2022hereditary}.

Note that size bounds and resolution guarantees must hold across all time points for use cases where observation collections will see sustained use over time or the endpoint for an observation collection is indeterminate (e.g., computations with a real-time termination condition).
This factor obliges policy design nuance: if resolution guarantees shift as generations elapse and observations become more ancient, cohorts of retained strata must, in dwindling, morph through a constrained series of retention patterns.

\subsubsection{Stream Curation Policy Algorithms}

A stream curation policy algorithm produces a sequence of retained observation sets, one for each time point when the underlying data stream is sampled.
Policy algorithms must meet several requirements.

First and foremost, each of an algorithm's retention sets should satisfy all stipulated requirements on collection size and gap size.

Additionally, to be viable, each retention set must be a subset of all preceding retention sets.
Otherwise, a previously discarded observation would be selected for inclusion, which is impossible (once data is discarded it cannot be retrieved).
We call this property self-consistency.

For the sake of efficient operation, we impose a final nuts-and-bolts requirement on algorithm implementation: computational reducibility, meaning that observation times retained at any point must be directly enumerable.
This capability enables observations' time points to be deduced positionally from a buffer index, so observation times may be omitted.
In the context of hereditary stratigraphy --- where e.g., observations are single bits or bytes --- several-fold space savings may result.

Memory savings from computational reducibility can matter greatly.
Since the austere early days of computing, typical hardware has trended away from resource scarcity \citep{kushida2015cloud}, yet memory efficiency remains crucial in certain contexts where hardware trends have stagnated or even regressed memory capacity.

Aspects of high-performance computing (HPC) expect to continue scaling out with lean processing cores \citep{sutter2005free,morgenstern2021unparalleled}.
The Cerebras Wafer-Scale Engine (WSE) epitomizes this trend, packaging an astounding 850,000 computing elements onto a single die.
Individual WSE cores, however, have just 48kb of memory and can only communicate within a local mesh \citep{cerebras2021wafer,lauterbach2021path}.

Component economization and miniaturization has also influenced the Internet of Things (IoT) revolution \citep{rfc7228,ojo2018review}, an ongoing march of ubiquitization potentially culminating in a ``smart dust'' of downscale, low-end hardware \citep{warneke2001smart}.
The Michigan Micro Mote platform for instance, provisions a mere 3kb of retentive memory within its cubic millimeter form factor \citep{lee2012modular}.
More recent work has explored devices tucked within dandelion-like parachutes \citep{iyer2022wind}.
That chipset is yet more austere, provisioning 2 kilobytes of volatile flash memory --- and a mere 128 bytes of retentive memory \citep{microchip2014atiny20}.
As engineers continue to plumb the extremities of technical feasibility, bare-bones computing modalities will persist, and necessitate lightweight data stream algorithms such as ours.

\subsubsection{Existing Work Related to Stream Curation}

Stream curation closely relates to existing binning procedures that group together and consolidate contiguous subsections of a data stream.

The fixed-resolution policy algorithm presented in Section \ref{sec:fixed-resolution-algo} is simple down sampling via decimation \citep[p. 31]{crochiere1983multirate}.
Our depth-proportional resolution (Section \ref{sec:depth-proportional-resolution-algo}) and recency-proportional resolution (Section \ref{sec:recency-proportional-resolution-algo}) algorithms share close structural similarity with the online equi- and vari-segmented schemes proposed in
\citep{zhao2005generalized}.
The depth-proportional resolution structure has appeared additionally in ``pyramidal'' and ``tilted'' time window schemes \citep{aggarwal2003framework,han2005stream}.

To our knowledge, these previous implementations all unfold through stateful iteration, with representational overhead for each stored value (e.g., timestamps, segment length values); stateless enumerations of retained set composition are original to our work in this paper.
We are also not aware of existing equivalents or near-equivalents of the presented geometric sequence $n$th root and curbed recency-proportional resolution policy algorithms (Sections \ref{sec:geom-seq-nth-root-algo} and \ref{sec:curbed-recency-proportional-resolution-algo}).

Work on ``amnesic approximation'' tackles a similar end goal, but has only loose technical overlap.
\cite{palpanas2004online} provides a generalized scheme to incrementally down-sample a data stream pursuant to a user-defined time-back-to-value function by means of a stateful iterative process.

\subsubsection{Applications of Stream Curation}

Correspondences between stream curation and more general binning on data streams suggest avenues for application of stream curation policy algorithms across data stream scenarios.

Perhaps most plainly, the stream curation down-sampling problem parallels those faced by unattended data logger devices that manage incoming observation streams, often on an indefinite or indeterminate basis.
Devices incorporated into wireless sensor networks may also experience irregular device uplink schedules.
The ``mobile sink'' paradigm \citep{jain2022survey}, for example, relies on network base station(s) that physically traverse the coverage area and uplink sensor nodes on potentially sporadic patrol schedules.

Existing work has largely applied rolling full retention of most recent data within available buffer space \citep{fincham1995use} or dismissal of incoming data after storage reaches capacity \citep{saunders1989portable,mahzan2017design}.
Strategies to maintain a cross-sectional time sample appear scant, although there has been some work to extend the record capacity of data loggers through application-specific online compression algorithms \citep{hadiatna2016design}.

\sectionbreak

The remainder of this paper surveys a suite of stream curation algorithms, introducing intuition, presenting the formal definition, proving self-consistent stratum discard sequencing, and demonstrating resolution and collection size properties.
For reference, we include a Glossary of terminology related to hereditary stratigraphy and stream curation in the Appendix.


\section{Stream Curation Algorithms} \label{sec:annotation-algorithms}


\providecommand{\dissertationelse}[2]{%
\ifdefined\DISSERTATION
#1
\else
#2
\fi
}


\providecommand{\dissertationonly}[1]{%
\ifdefined\DISSERTATION%
#1%
\else%
\fi
}

\begin{figure*}
  \centering
  \footnotesize
  \begin{tabular}{m{0.07\textwidth}@{}|c@{}|c@{\hskip 0.01\textwidth}|m{0.14\textwidth}}
\hspace{-1ex}Policy&Lower-Resolution Parameterization&Higher-Resolution Parameterization&\makecell[c]{Properties}\\\hline
    \rotatebox{90}{\textbf{Fixed Resolution}}
  &
    \makecell{
      \includegraphics[valign=t,width=0.3\textwidth]{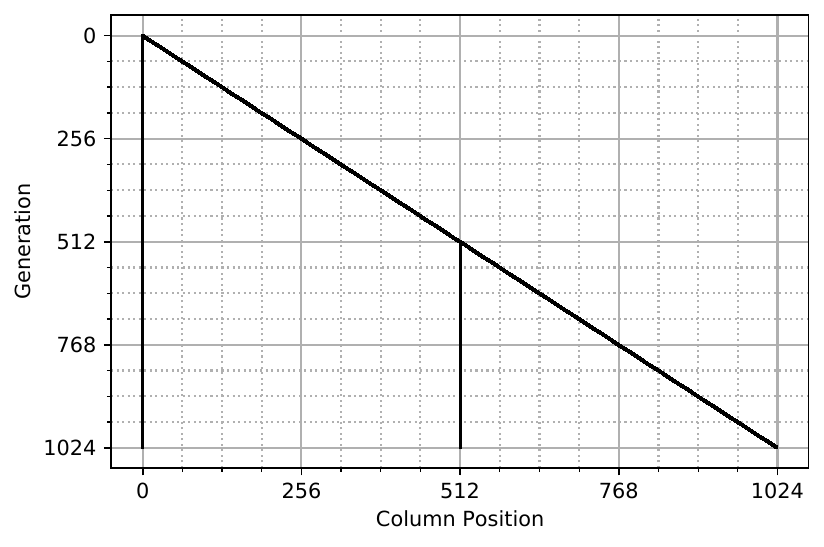}
    }
  &
    \makecell{
      \includegraphics[valign=t,width=0.3\textwidth]{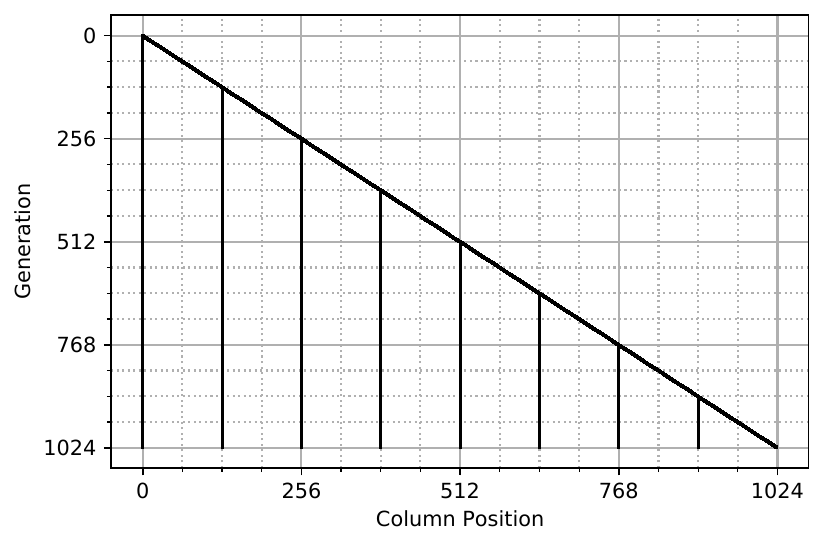}
    }
  &
  \makecell[{{p{0.14\textwidth}}}]{
  \centering
    \bf{Space Complexity}\\
    $\mathcal{O}(n)$\\
    \bf{MRCA Uncertainty}\\
    $\mathcal{O}(1)$
  }
  \makecell[{{p{0.14\textwidth}}}]{
  \raggedright
    where $n$ is gens elapsed.
  }\\\hline
    \adjustbox{
      minipage=10em,
      rotate=90,
    }{
      \centering
      \textbf{Depth-Proportional\\Resolution}
      \par
    }
  &
    \makecell{
      \includegraphics[valign=t,width=0.3\textwidth]{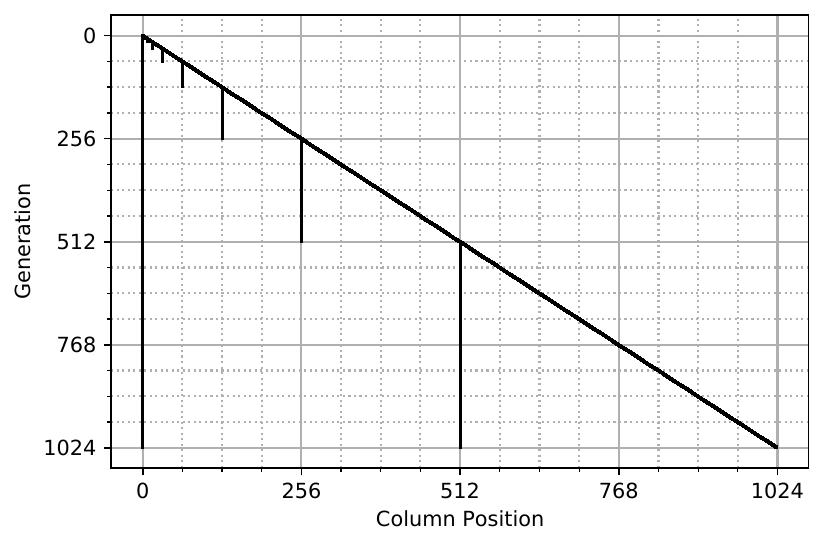}
    }
  &
    \makecell{
      \includegraphics[valign=t,width=0.3\textwidth]{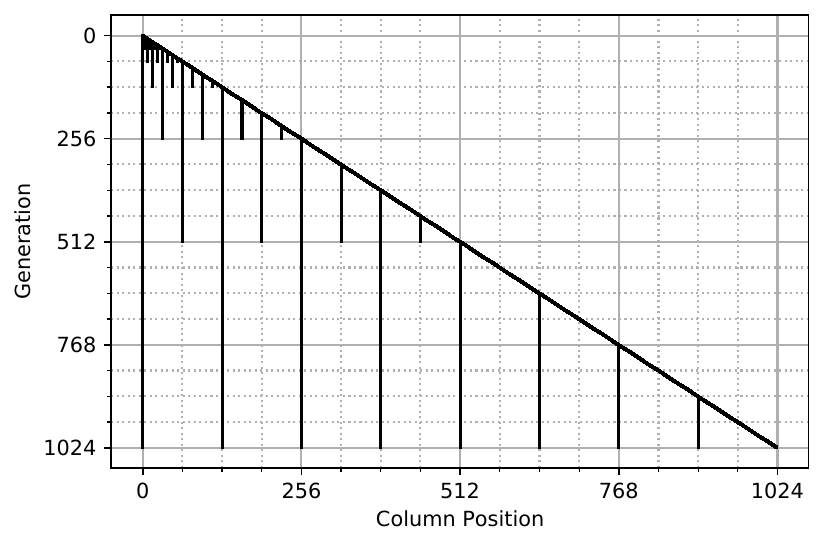}
    }
  &
  \makecell[{{p{0.14\textwidth}}}]{
  \centering
    \bf{Space Complexity}\\
    $\mathcal{O}(1)$\\
    \bf{MRCA Uncertainty}\\
    $\mathcal{O}(n)$
  }
  \makecell[{{p{0.14\textwidth}}}]{
  \raggedright
    where $n$ is gens elapsed.
  }\\\hline
  \adjustbox{
    minipage=10em,
    rotate=90,
  }{
    \centering
    \textbf{Recency-proportional\\Resolution}
    \par
  }
  &
  \makecell{
    \includegraphics[valign=t,width=0.3\textwidth]{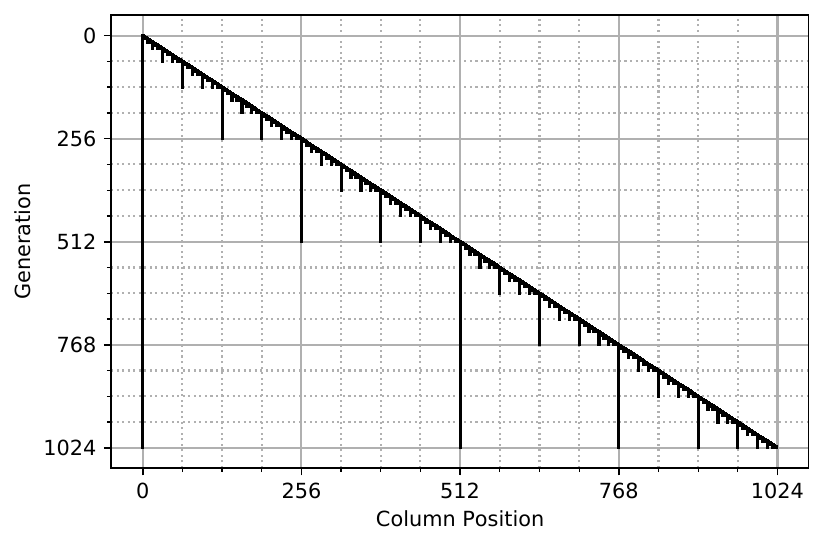}
  }
  &
  \makecell{
    \includegraphics[valign=t,width=0.3\textwidth]{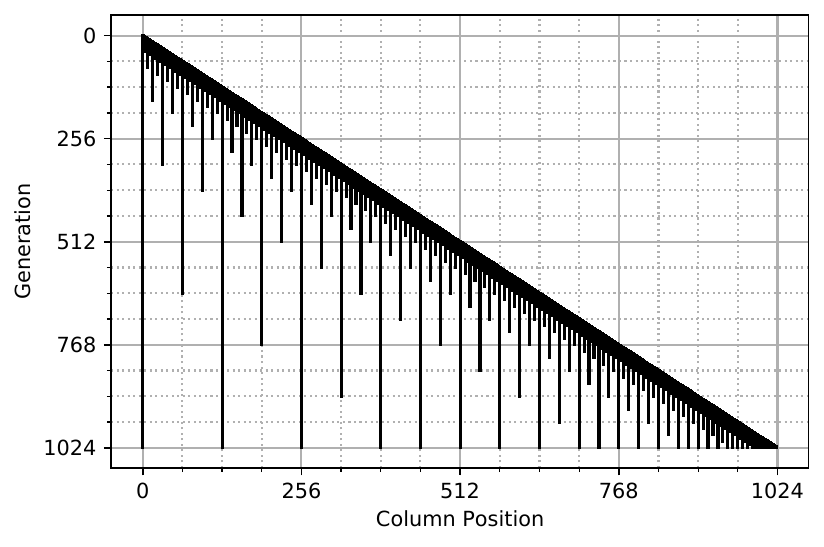}
  }
  &
  \makecell[{{p{0.14\textwidth}}}]{
  \centering
    \bf{Space Complexity}\\
    $\mathcal{O}(\log(n))$\\
    \bf{MRCA Uncertainty}\\
    $\mathcal{O}(m)$
  }
  \makecell[{{p{0.14\textwidth}}}]{
  \raggedright
    where $m$ is gens since MRCA and $n$ is total gens elapsed.
  }\\\hline
    \adjustbox{
      minipage=10em,
      rotate=90,
    }{
      \centering
      \textbf{Geometric Sequence\\$n$th Root}
      \par
    }
  &
    \makecell{
      \includegraphics[valign=t,width=0.3\textwidth]{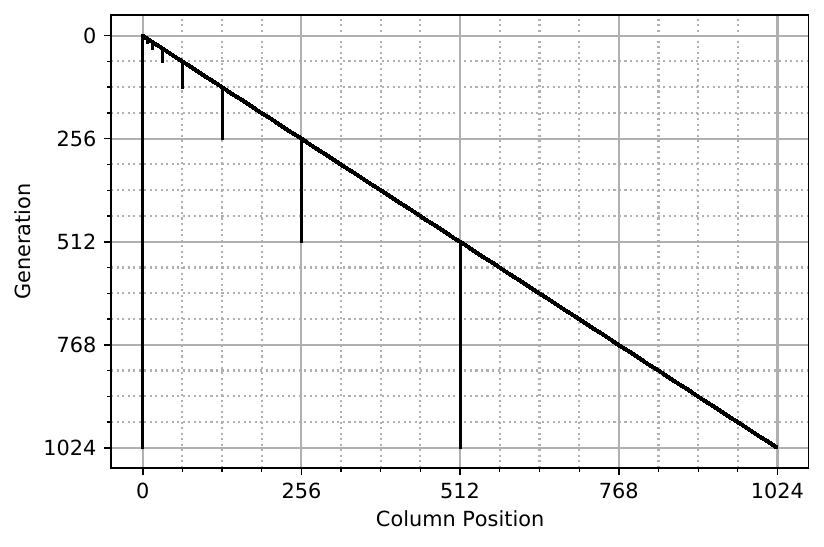}
    }
  &
    \makecell{
      \includegraphics[valign=t,width=0.3\textwidth]{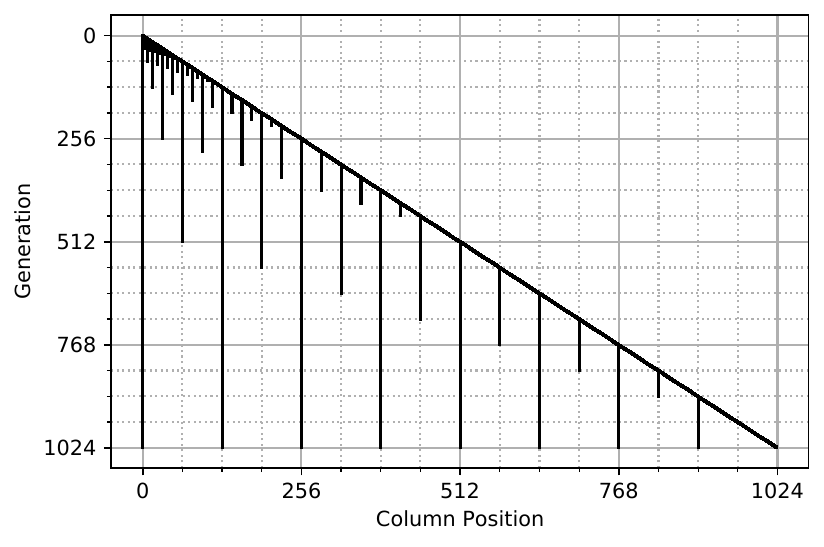}
    }
  &
  \makecell[{{p{0.14\textwidth}}}]{
  \centering
    \bf{Space Complexity}\\
    $\mathcal{O}(1)$\\
    \bf{MRCA Uncertainty}\\
    $\mathcal{O}(n^{1/a} \times m)$
  }
  \makecell[{{p{0.14\textwidth}}}]{
  \raggedright
    where $m$ is gens since MRCA, $a$ is a user-chosen constant, and $n$ is total gens elapsed.
  }\\\hline
    \adjustbox{
      minipage=10em,
    rotate=90,
    }{
      \centering
      \textbf{Curbed Recency-proportional Resolution}
      \par
    }
  &
    \makecell{
      \includegraphics[valign=t,width=0.3\textwidth]{hereditary-stratigraph-concept/tex/submodules/hereditary-stratigraph-concept-binder/binder/retention-policies/teeplots/guaranteed_depth_proportional_resolution=1+num_layers=1024+stratum_retention_predicate=tapered-depth-proportional-resolution+viz=tweaked-stratum-retention-drip-plot+ext=}
    }
  &
    \makecell{
      \includegraphics[valign=t,width=0.3\textwidth]{hereditary-stratigraph-concept/tex/submodules/hereditary-stratigraph-concept-binder/binder/retention-policies/teeplots/guaranteed_depth_proportional_resolution=4+num_layers=1024+stratum_retention_predicate=tapered-depth-proportional-resolution+viz=tweaked-stratum-retention-drip-plot+ext=}
    }
  &
  \makecell[{{p{0.14\textwidth}}}]{
  \centering
    \bf{Space Complexity}\\
    $\mathcal{O}(1)$\\
    \bf{MRCA Uncertainty}\\
    $\mathcal{O}(n^{1/a} \times m)$
  }
  \makecell[{{p{0.14\textwidth}}}]{
  \raggedright
   where $m$ is gens since MRCA, $a$ is a user-chosen constant, and $n$ is total gens elapsed.
  }

  \end{tabular}
  \caption{
  Comparison of stream curation policy algorithms.
  Policy visualizations show retained strata in black.
  Time progresses along the $y$-axis from top to bottom.
  New strata are introduced along the diagonal and then ``drip'' downward as a vertical line until eliminated.
  The set of retained strata present within a column at a particular generation $g$ can be read as intersections of retained vertical lines with a horizontal line with intercept $g$.
  Policy visualizations are provided for two contrastive parameterizations for each policy algorithm.
  }
  \label{fig:retention-policies}
\end{figure*}

This section collects specification and validation of five proposed stream curation algorithms.
These algorithms operate online on a rolling stream of incoming observations to maintain a representative subsample of retained observations.
Stream curation algorithms differ in the growth rate allowed for the curated observation collection and with respect to relative prioritization of retaining recent observations compared to older observations.


We introduce the following five curation policy algorithms,
\begin{itemize}
\item Fixed-Resolution (FR) Policy Algorithm (Section \ref{sec:fixed-resolution-algo}),
\item Depth-Proportional Resolution (DPR) Policy Algorithm (Section \ref{sec:depth-proportional-resolution-algo}),
\item Recency-Proportional Resolution (RPR) Policy Algorithm (Section \ref{sec:recency-proportional-resolution-algo}),
\item Geometric Sequence $n$th Root(GSNR) Policy Algorithm (Section \ref{sec:geom-seq-nth-root-algo}), and
\item Curbed Recency-Proportional Resolution (CRPR) Policy Algorithm (Section \ref{sec:curbed-recency-proportional-resolution-algo}).
\end{itemize}
The accomanying \texttt{hstrat} library provides reference implementations for all five policy algorithms \citep{moreno2022hstrat}.

Figure \ref{fig:retention-policies} compares retention patterns induced by each algorithm and recaps each policy algorithm's principal properties.
FR and DPR follow even curation prioritization while RPR, GSNR, and CRPR follow recency-proportional curation prioritization.
Recency-proportional techniques are potentially useful in a variety of contexts where recent information is more valuable or varied.
Collection size grows the most aggressively under FR as $\mathcal{O}(n)$.
RPR reduces collection size growth to $\mathcal{O}(\log n)$.
The remaining algorithms enforce a fixed cap on curated collection size.
Note that GSNR and CRPR exhibit identical asymptotic properties.
We include both, as CRPR is an engineered extension of GSNR that improves the efficacy of available space usage during initial shallow record depth.

Appropriate algorithm choice will depend on use case scenario.
Relevant criteria to consider include
\begin{itemize}
  \item uncertainty and magnitude of upper bounds on record depth, if any,
  \item available storage capacity,
  \item relative importance of recent and ancient observations, and
  \item any hard record quality requirements (i.e., maximum acceptable gap size).
\end{itemize}

In a real-time scenario, record depth bounds would be considered in terms of upper bounds on chronological duration of record collection and the real-time sampling rate of observations.

\subsection{Fixed Resolution (FR) Policy Algorithm}
\label{sec:fixed-resolution-algo}

The fixed resolution (FR) policy algorithm adopts a simplistic strategy: retain observation time points at intervals of a user-specified factor $r$.
This procedure equates to naive downsampling via decimation \citep[p. 31]{crochiere1983multirate}.
We include discussion of this policy algorithm primarily for completeness.

The procedures to decide eliminated time points during the update process and to enumerate retained observation times are trivial.
Pruning targets time points that are not multiples of the fixed resolution $r$ and enumeration traverses time points at an even stride $r$ until reaching record depth.

Unsurprisingly, extant record size order of growth is $\mathcal{O}(n)$.
A simple justification can be given: record depth $n$ provides an upper bound for extant record size because, being strictly subtractive, pruned size cannot exceed record depth.
Likewise, the downsampling factor $r$ enforces an upper bound $r$ on gap size between retained observations.
Any larger gap size would require at least one time point multiple of $r$ to have been discarded.

The FR policy algorithm provides stable absolute accuracy indefinitely.
Such an approach will be necessary for scenarios that tolerate only rigid observational uncertainties.
However, because extant record size grows linearly, this policy does not suit applications expecting long observational duration, high-frequency observation, or storage limitation.
Figure \ref{fig:retention-policies} includes a time-lapse of the extant record under the FR policy algorithm.


\subsection{Depth-Proportional Resolution (DPR) Policy Algorithm}
\label{sec:depth-proportional-resolution-algo}

The depth-proportional resolution policy algorithm provides capped extant record size with even coverage over record history.
This guarantee requires retained observations to be spaced with a gap width proportional to record depth.
Alternatively, DPR can be seen as interspersing the historical record with a fixed number of waypoints.

Because observation time points are immutable after the fact, translating this naive DPR plan to a rolling, ``online'' basis necessitates a further consideration.
To conservatively maintain resolution guarantees, it is acceptable to err on the side of caution by choosing gap sizes smaller than the worst-case requirement.
This approach allows a simple trick for achieving policy self-consistency: flooring gap sizes to the next lower power of two.
Under this scheme, gap size will periodically double.
Beacause multiples of a binary power superset multiples of higher binary power, self-consistency is maintained.
As intuition, therefore, the full DPR policy algorithm can be conceived of through a simple principle: 
each time a capacity threshold is reached, every second observation is eliminated.

Policy algorithm behavior is parametrized by a minimum number of bin windows over record history, $r$.
All gap sizes are equal (or halved), so absolute resolution guarantee of at least $n/r$, with $n$ as record depth, applies.
Further, because binary flooring operations at most halve gap widths, record count at most doubles.
This property gives the record size cap of $2r$.

Algorithm \ref{alg:depth-proportional-resolution-algo-enum-retained-ranks} provides enumeration of retained time points under the DPR scheme.
Although this process can be achieved via set subtraction between enumerations at successive time points with $\mathcal{O}(1)$ complexity, Algorithm \ref{alg:depth-proportional-resolution-algo-gen-drop-ranks} provides a more expedient approach.
Figure \ref{fig:retention-policies} includes a time-lapse of the extant record under the DPR policy algorithm.

For simplicity, we have presented a bare-bones approach to depth-proportional resolution, where the entire record is simultaneously decimated by a factor of two upon reaching capacity.
This procedure results in regular episodes where extant record count instantaneously halves.
Such fluctuation may be undesirable.
Many use-cases for constant space complexity will arise from fixed memory allocation.
Such reserved memory cannot typically be used for other purposes, so any unused space would be wasted.

An alternate ``tapered'' variant of the depth-proportional resolution algorithm remedies this space-usage quirk.
The tapered approach eliminates phased-out observations one by one as new observations accrue, but otherwise has the same properties as the algorithms described for DPR.
The accompanying \texttt{hstrat} software library implements both variants.

\begin{algorithm}
\caption{Depth-proportional Resolution Stratum Enumeration}
\label{alg:depth-proportional-resolution-algo-enum-retained-ranks}
\begin{algorithmic}[1]
    \Require{ $\texttt{n}$ -- the number of strata deposited }
    \Require{ $\texttt{r}$ -- the fixed resolution desired }
    \Ensure{ array of retained strata }

    \State $\texttt{uncertainty} \gets (\text{largest integral power of two} \le \frac{n}{r + 1}) \lor 1$
    \State $\texttt{arr} \gets \text{empty array of length } \frac{n}{\texttt{uncertainty}} + 1$

    \For{$i = 0$ \textbf{to} $\texttt{n} - 1$ \textbf{step} $\text{uncertainty}$}
        \State $\texttt{arr} [$i$] \gets i$
    \EndFor
    \State $\texttt{last\_rank} \gets \texttt{n} - 1$
    \If{$\texttt{last\_rank} > 0$ \textbf{and} $\texttt{last\_rank} \bmod \texttt{uncertainty} \neq 0$}
        \State $\texttt{\texttt{last\_rank}} [$i$] \gets \texttt{last\_rank}$
    \EndIf
\end{algorithmic}
\end{algorithm}

\begin{algorithm}
\caption{Depth-proportional Resolution Discard Generator}
\label{alg:depth-proportional-resolution-algo-gen-drop-ranks}
\begin{algorithmic}[1]
    \Require{ $\texttt{n}$ -- the number of strata deposited }
    \Require{ $\texttt{r}$ -- the fixed resolution desired }
    \Ensure{ array of dropped strata }

    \State $\texttt{curr\_uncertainty} \gets (\text{largest integral power of two} \le \frac{n}{r + 2}) \lor 1$
    \State $\texttt{prev\_uncertainty} \gets (\text{largest integral power of two} \le \frac{n}{r + 1}) \lor 1$
    \State $\texttt{arr} \gets \text{empty array of length } \frac{n}{\texttt{uncertainty}} + 1$

    \If{$\texttt{curr\_uncertainty} \neq \texttt{prev\_uncertainty}$}
        \For{$i = \texttt{prev\_uncertainty}$ \textbf{to} $\texttt{n} - 2$ \textbf{step} $\text{curr\_uncertainty}$}
            \State $\texttt{arr} [$i$] \gets i$
        \EndFor
    \EndIf
    \If{$\texttt{n} - 2 \bmod \texttt{curr\_uncertainty} \neq 0$}
        \State $\texttt{\texttt{last\_rank}} [\texttt{n} - 2] \gets \texttt{n} - 2$
    \EndIf
\end{algorithmic}
\end{algorithm}

\subsection{Recency-proportional Resolution (RPR) Policy Algorithm}
\label{sec:recency-proportional-resolution-algo}

This stream curation algorithm's properties fall between the properties of the fixed resolution (FR) and depth-proportional resolution (DPR) policy algorithms, covered in the immediately preceding sections.

Recall that the DPR policy algorithm's gap widths grow in linear proportion to record depth.
In contrast, the fixed resolution algorithm's gap width remains constant below a specified bound across record depths.

The recency-proportional resolution (RPR) policy algorithm bounds gap width to a linear factor of layer age (i.e., time steps back from the newest layer).
Here, layer refers to the $m$th observation ingressed from the underlying data stream being curated.

Suppose $n$ data stream observations have elapsed.
Then, for a user-specified constant $r$, no gap width for layer $m$ will exceed size
\begin{align}
  \left\lfloor \frac{n - m}{r} \right\rfloor.
  \label{eqn:rpr-gap}
\end{align}
Resolution at each layer widens linearly with record depth.
Consequently, resolution widens in linear proportion to layer age.
Resolution for any given layer age, however, remains constant for all record depths.

The FR and DPR policy algorithms exhibit $O(r)$ and $O(rn)$ extant record orders of growth, respectively.
We will show extant record order of growth as $O(r\log{n})$ under the recency-proportional resolution policy algorithm.

Algorithm \ref{alg:recency-proportional-algo-gen-drop-ranks} enumerates time points of dropped observations under the RPR policy algorithm.
Figure \ref{fig:retention-policies} includes a time-lapse of the extant record under this policy algorithm.

The extant record is determined iteratively, beginning at observation time zero --- which is always retained.
Per Equation \ref{eqn:rpr-gap}, gap width to the next retained observation can be at most $\lfloor n/r \rfloor$ sites, where $n$ is record depth.
Although retaining the observation at time $\lfloor n/r \rfloor$ would satisfy policy resolution guarantees, a slight complication is necessary to ensure self-consistency.
A fuller rationale will follow, but in short, gap width is floored to the next lower power of two,
\begin{align*}
  2^{\lfloor \log_{2}\left(\frac{n}{r}\right) \rfloor}.
\end{align*}
The next iteration repeats the procedure from the newly retained observation time instead of from time zero.
Iteration continues until reaching the newest observation.

The set of observations to eliminate can be calculated from set subtraction between enumerations of the historical record at time points $t-1$ and $t$.
So, update time complexity follows from extant record enumeration time complexity, which turns out to be $O(\log n)$.
We provide a tested, but unproven, constant-time pruning enumeration implementation in the \texttt{hstrat} library, but will not cover it here. 
The extant record order of growth of $O(\log n)$ also follows from the record enumeration algorithm, as detailed in Theorem \ref{thm:recency-proportional-resolution-algo-space-complexity}.


Why does flooring step sizes to a binary power ensure self-consistency?
Let us begin by noting properties applicable to all layers $l$,
\begin{enumerate}
\item gap width provided at retained layer $l$ increases monotonically as record depth grows,
\item the retained observation preceding or at $l$ has observation time at an even multiple of surrounding gap widths, and
\item all observations at time points that are multiples of gap width past $l$ up to the newest observation are retained.
\end{enumerate}
Observe that gap width decreases monotonically with decreasing layer age (i.e., increasing layer recency).

Properties 2 and 3 occur as a result of stacking monotonically-decreasing powers of two.
Subsequent smaller powers of two tile evenly to all multiples of a larger power of 2, giving property 3.
Conversely, preceding larger powers of 2 can be evenly divided by succeeding smaller powers of 2, ensuring that the edges of smaller powers of 2 gaps occur at even multiples of their gap width, giving property 2.

Under the binary flooring procedure, when gap size increases at a layer it will double (or quadruple, octuple, etc.).
Availability of the new gap endpoint after a gap size increase is guaranteed from the tiling properties due to that endpoint being an even multiple of original step size.

The RPR policy algorithm provides stable relative accuracy indefinitely.
This makes it particularly attractive in applications to phylogenetic tracking scenarios using hereditary stratigraphy.
To meaningfully describe an ancestry tree with deep branches, information must be retained across all evolutionary time but higher absolute estimation error is typically acceptable in describing more ancient most recent common ancestor (MRCA) events.%
\footnote{%
At comparable annotation sizes, we have found that recency-proportional distribution of gap widths outperforms even gap width distribution in phylogenetic information recovery \citep{moreno2022hereditary}.
Preliminarily, maintaining 3\% relative precision appears sufficient to eliminate most bias from reconstruction error on phylogenetic metrics \citep{moreno2023toward}.
}

The RPR policy algorithm's indefinite stability may be particularly useful in scenarios of indefinite or indeterminate record keeping duration.
Although annotation extant record size grows unboundedly, logarithmic memory usage growth is manageable in most practical scenarios.
However, this policy would not suit applications with hard caps on annotation size. 

\begin{algorithm}
\caption{Recency-proportional Resolution Stratum Discard Generator}
\label{alg:recency-proportional-algo-gen-drop-ranks}
\begin{algorithmic}
    \Require{ $\texttt{n}$ -- the number of strata deposited }
    \Require{ $\texttt{r}$ -- the fixed resolution desired }
    \Ensure{ array of dropped strata }

    \Procedure{NumberToCondemn}{$\texttt{n}, \texttt{r}$}
        \If{$(\texttt{n} \bmod 2 = 1) \lor (\texttt{n} < 2 \cdot \texttt{r} + 1)$}
            \Return $0$
        \Else
            \Return $1 + \Call{NumberToCondemn}{$\texttt{n} / 2$, \texttt{r}}$
        \EndIf
    \EndProcedure

    \State $\texttt{num\_to\_condemn} \gets \Call{NumberToCondemn}{\texttt{n}, \texttt{r}}$
    \State $\texttt{arr} \gets \text{empty array of length num\_to\_condemn}$

    \For{$i = 0$ \textbf{to} $\texttt{num\_to\_condemn} - 1$}
        \State $\texttt{arr} [$i$] \gets \texttt{n} - 2^{i} \cdot (2 \texttt{r} + 1)$
    \EndFor
\end{algorithmic}
\end{algorithm}

\begin{theorem}{Recency-proportional Resolution Space Complexity}
\label{thm:recency-proportional-resolution-algo-space-complexity}

The \gls{extant record size} of the Recency-proportional Resolution Policy Algorithm grows with order $\mathcal{\theta}{(k \log n)}.$

\end{theorem}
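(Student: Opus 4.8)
The plan is to count retained observations directly from the iterative enumeration described above, organizing elapsed history into geometric ``octave'' bands indexed by layer age and showing that each band contributes $\Theta(r)$ retained layers. Throughout I identify the theorem's constant $k$ with the resolution parameter $r$, and I read $\Theta$ as requiring matching upper and lower bounds on extant record size as a function of record depth $n$.

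First I would make the realized gap-width rule precise as a function of layer age. At a retained layer $m$ with age $d = n - m$, Equation~\ref{eqn:rpr-gap} caps the gap by $\lfloor d/r\rfloor$, and the binary-flooring step fixes the realized gap at $g(d) = \max\{1,\, 2^{\lfloor \log_2 \lfloor d/r\rfloor\rfloor}\}$. The key structural observation is that $g(d)$ is constant on age intervals of the form $[r 2^j, r 2^{j+1})$, taking value exactly $2^j$ there, while for $d < r$ the cap floors to zero and every layer is retained.

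Next I would partition the history by age into bands $B_j = \{\text{layers with age } d \in [r 2^j,\, r 2^{j+1})\}$ for $0 \le j \le J$ with $J = \lceil \log_2(n/r)\rceil$, together with a dense band $B_{-} = \{d < r\}$ adjacent to the present. Band $B_j$ has width $r 2^j$ and is traversed in steps of $2^j$, so it contributes $r + O(1)$ retained layers; the dense band contributes at most $r$. Summing yields the upper bound $(r + O(1))(J + 1) + r = O(r\log n)$. For the lower bound, each full band yields at least $\lfloor r 2^j / 2^j\rfloor = r$ retained layers, and there are $\Theta(\log(n/r)) = \Theta(\log n)$ such bands once $n$ is large relative to $r$, giving $\Omega(r \log n)$. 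Combining the two directions establishes order $\Theta(r \log n) = \Theta(k \log n)$.

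The main obstacle will be the bookkeeping that links the forward iteration --- whose jump length is set by the \emph{current} position's age --- to the clean band decomposition: I must verify that the iteration's step really stays fixed at $2^j$ throughout $B_j$ (this is exactly where the monotonicity and tiling properties~1--3 invoked above for self-consistency do the work), that crossing a band boundary neither skips a band nor double-counts its endpoint, and that the floor operations together with the $d < r$ dense region perturb each band's count only by an additive $O(1)$. Handling these boundary terms uniformly across all $j$, rather than band by band, is what lets the additive errors collapse into the clean $\Theta(r)$-per-band estimate and yields matching constants in both directions of the $\Theta$ bound.
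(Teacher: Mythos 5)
Your approach is genuinely different from the paper's. The paper argues from the discard side: it analyzes the drop generator (Algorithm~\ref{alg:recency-proportional-algo-gen-drop-ranks}), observes that $i$ strata are dropped at a time point $n$ when $2^i \mid n$ (subject to the $n \ge 2r+1$ cutoff), sums drops over elapsed time to bound retained count as deposits minus drops, and then appeals to the remark in Section~\ref{sec:extant_record_oog} to upgrade the upper bound to $\Theta$. You argue from the retention side: you analyze the forward enumeration directly, partition history into age octaves $B_j$ on which the realized step is exactly $2^j$, and count $\Theta(r)$ retained layers per band across $\Theta(\log n)$ bands. Your route is more self-contained and, frankly, more rigorous in the upper-bound direction than the paper's write-up (whose intermediate bound $\sum_{i=1}^{\log_2 n} n = n \log_2 n$ on dropped strata, and the subsequent step ``retained $\le n - n\log_2 n \le \log_2 n$,'' do not parse as written); your decomposition also supplies an explicit lower-bound argument, which the paper does not really provide beyond citing the appendix.

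However, your lower bound has a genuine gap, and it sits exactly at the boundary bookkeeping you flagged as the main obstacle: bands can be skipped. From a retained layer of age $d \in B_{j+1}$ the step is $2^{j+1}$, so the landing age is at least $(r-1)2^{j+1}$; this lands inside or above $B_j$ only when $2(r-1) \ge r$, i.e., $r \ge 2$. For $r = 1$ the claim that each full band yields at least $r$ retained layers is false: with $r = 1$ and $n = 2^k$, the very first gap taken from time zero is $2^{\lfloor \log_2 n \rfloor} = n$, so the extant record collapses to $\{0, n\}$ --- constant size, with every band skipped. More generally, at $r=1$ the retained count tracks the binary digit sum of $n$, oscillating between $O(1)$ and $\Theta(\log n)$, so the two-sided $\Theta(k \log n)$ claim itself fails for $k = 1$ on infinitely many $n$. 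Your argument becomes sound once you impose $r \ge 2$: your entry-age estimate then guarantees entry into each band at age at least $(2r-2)2^j \ge r2^j$, hence at least $r-1$ visits per band, giving $\Omega(r \log n)$. You should state that restriction explicitly (and note that the theorem as stated needs it too) rather than expecting the additive boundary errors to collapse uniformly over all $j$ and all $r$.
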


\begin{proof}
\label{prf:recency-proportional-resolution-algo-space-complexity}
As per \ref{sec:extant_record_oog}, we will set out to prove that output array of this policy algorithm has an order of growth of $\mathcal{\theta}{(k \log n)},$ where $k$ is a user-provided resolution and $n$ is the number of depositions.

Algorithm \ref{alg:recency-proportional-algo-gen-drop-ranks} determines the array of strata to be dropped at any given time point.
Observe that whenever $2 \mid n$ at least one stratum will be dropped.
More generally, for any positive integer $i \le \log_2 n$, we have that if $2^i \mid n$ then $i$ strata will be dropped. 
Thus, the number of dropped strata is bounded above by $\sum_{i=1}^{\log_2 n} n = n \log_2 n.$
As such, the number of retained strata is bound by $n - n \log_2 n \le \log_2 n$ for all positive $n.$
Given that no strata are dropped when $\frac{n}{2} - 1 < k,$ we observe that the output array of this policy algorithm is bound above by $\mathcal{O}{(k \log n)}.$
Via \ref{sec:extant_record_oog}, we can conclude that this bound is actually $\mathcal{\theta}{(k \log n)}.$
\end{proof}

\subsection{Geometric Sequence $n$th Root (GSNR) Policy Algorithm}
\label{sec:geom-seq-nth-root-algo}

The geometric sequence $n$th root (GSNR) policy algorithm arranges recency-proportional gap sizes among a capped-size set of retained observations.
Although recency-proportional gap size will not be bounded to a fixed threshold in this context, GSNR seeks to minimize worst relative gap size as much as possible.

Recall that the recency-proportional policy algorithm (RPR) exhibits logarithmic growth in extant record size with respect to record depth $n$.
When an increased order of magnitude depth is reached, additional observations must be retained under the RPR algorithm.
For $a = \log_b(n)$, $a$ is proportional to extant record size.
Equivalently, $b^a = n$.
Under RPR, growth in extant record size can be roughly conceptualized as related to insufficiency of the base $b$ to reach $n$ within $a$ multiplicative steps.
Growth in $a$ --- i.e., additional multiplication by $b$ --- can be thought of in terms of adding a level of structural hierarchy within the layout of retained observations.
As $n$ increases, additional levels of structural hierarchy become necessary.
These additional hierarchical levels increase extant record size.

In order to prevent such unbounded growth, the GSNR policy algorithm fixes the number of hierarchical levels $a$ and accommodates additional record depth by adjusting the multiplicative factor $b$.
This scheme can be imagined enforcing as arrangement of $a$ exponentially-spaced target points along the historical record.
As time elapses, the quantity of target points remains constant.
The target points shift to fill $n$ by increasing their exponential spacing factor $b$.
The necessary magnitude of $b$ works out as $b = n^{1/a}$.
Target ages therefore correspond to $n^{0/a}, n^{1/a}, \ldots, n^{a/a}$.
This geometrically-spaced target point sequence eponymizes the GSNR policy algorithm.
Converting observation age to absolute time point, targets span $n - n^{0/a}, n - n^{1/a}, \ldots, n - n^{a/a}$.


Now, attention turns to exploiting the $n$th root geometric targets to define a retention policy.
We will break the problem down to consideration of one individual target point $n - n^{x/a}$.
Under the constraint of $\mathcal{O}(1)$ total space for curated observations, we can only curate a fixed number of observations per target point.
We will seek to curate a fixed size collection of retained observations to bound gap size past the target point below $n^{x/a}$.

By nature of definition, target point times advance monotonically.
As a consequence, a retained observation can remain behind a target point indefinitely.
We will incorporate such coverage into our design --- let's call such a point behind the target the ``backstop'' $\beta$.

We will use a power of 2 trick to maintain backstop coverage.
To begin, let us take the binary floor of half $n^{x/a}$,
\begin{align*}
  \kappa(n)
  &=
  2^{\lfloor \log_{2}(n^{x/a}/2) \rfloor}.
\end{align*}
We will retain recent time points that are multiples of this value $\kappa$.
Note that $n$ strictly increasing implies $\kappa(n)$ monotonically increasing.

Let's define a floor $B$ to help place our backstop $\beta$,
\begin{align*}
  B(n)
  &=
  \max \left(
    n - \left\lceil  \frac{3n^{x/a}}{2} \right\rceil,
    0
  \right)
\end{align*}
By design, $B$ precedes target point $n - n^{x/a}$.
Again, with $x < a$, $n$ strictly increasing implies $B(n)$ monotonically increasing.

Rounding $B$ up to the next time point aligned to cadence $\kappa$ gives our backstop time point $\beta$,
\begin{align*}
  \beta(n)
  &=
  B(n) + \big(-B(n) \bmod \kappa(n)\big).
\end{align*}
It can be shown that $\beta(n) \leq n^{x/a}$.
Because $B(n)$ is monotonically increasing, $\beta(n)$ is as well.

We will retain the time point set $S_x$ comprising multiples of $\kappa$ at or after the backstop $\beta$,
\begin{align*}
  S_x(n) = \{\, t \mid \beta(n) \leq t < n \text{ and } t \mod \kappa(n) = 0 \,\}.
\end{align*}

Why are time points $S_x(n)$ guaranteed to be a subset of $S_x(n-1) \cup \{n\}$ (i.e., self-consistency)?
Consider several non-mutually exclusive possible scenarios that could occur when transitioning from $n - 1$ to $n$,
\begin{enumerate}
  \item $\kappa$ changes: $\kappa(n - 1) \neq \kappa(n)$.

  Because $\kappa$ is monotonically increasing, $\kappa(n - 1) < \kappa(n)$.
  Binary flooring procedures have ensured $\kappa(n - 1)$ and $\kappa(n)$ are perfect powers of 2.
  Thus, $\kappa(n)$ is an even multiple of $\kappa(n - 1)$.
  So,
  \begin{align*}
    &\{\, t \mid \beta(n) \leq t < n \text{ and } t \mod \kappa(n) = 0 \,\}\\
    &\subseteq \{\, t \mid \beta(n) \leq t < n \text{ and } t \mod \kappa(n - 1) = 0 \,\}.
  \end{align*}
  Also, because $\beta$ monotonically increasing, $\beta{n - 1} \leq \beta{n}$.

  In conjunction, these stipulations give us $S_x(n) \subseteq \{S_x(n - 1), n\}$.

  \item $\beta$ changes: $\beta(n - 1) \neq \beta(n)$.

  Because $\beta$ is monotonically increasing, $\beta(n - 1) < \beta(n)$.
  We have $\beta(n) \mod \kappa(n) = 0$.
  Because $\kappa(n)$ is an even multiple of $\kappa(n - 1)$, we have $\beta(n) \mod \kappa(n - 1) = 0$.
  This implies $\beta(n) \in S_x(n - 1)$.
  Because $\beta$ monotonically increasing, change in $\beta$ strictly shrinks $S_x(n)$.

  \item $n \mod \kappa(n) = 0$

  Then the observation at time point $n$ is claimed for inclusion within $S_x(n)$.
  It is available, having just presently occurred.

\end{enumerate}

Note that if none of the above occur, then $S_x(n - 1) = S_x(n)$.
Any combination of the above maintains $S_x(n - 1) \in \{S_x(n), n\}$

Why does this construction for target $n^{x/a}$ satisfy our $\mathcal{O}(1)$ space complexity constraint?
It can be shown that $|S_x(n)| \leq 6$.
This stems from cadence $\kappa$ as the binary floor of half $n^{x/a}$ (at most a quartering reduction) and backstop $\beta$ set at most $3/2 \times n^{x/a}$ time points back.






Why does this curated set for target $n - n^{x/a}$ satisfy our gap size bound $n^{x/a}$?
Because cadence $\kappa \leq \frac{n^{x/a}}{2}$, gap size satisfies the bound.
Because backstop $\beta \leq n - n ^{x/a}$, the target time point is covered within the cadenced range.

We bring curated sets for each target point together in a set union to produce the overall GSNR retained set $R$,
\begin{align*}
  R(n)
  &=
  \bigcup_{i=0}^{a} S_i(n).
\end{align*}
Note that under this construction, policy algorithm self-consistency and extant record size bounds follow from those shown for constructions for individual targets $n^{x/a}$.
\footnote{
A strict upper bound of $6a + 2$ for extant record size can be calculated, although we do not demonstrate it here.
}
Time points that should be dropped to enact the GSNR policy algorithm follow from set subtraction between $R(n)$ and $R(n+1)$.

We have discussed resolution guarantees for individual target point constructions, but what resolution guarantee is afforded overall for an arbitrary time point $t$ with age $g = n - t$?
For such a time point $t$, the tightest resolution guarantee is that of the next older target point.
Taking
\begin{align*}
\alpha = n^{1/a}
\end{align*}
the age of the next older target time point will be
\begin{align*}
\alpha^{ \lceil \log_{\alpha} g \rceil },
\end{align*}

By the target time point resolution guarantee we established earlier, the gap size provided at this target time point is bounded by its age.

Observe that, at most, the next older target time point age will be a factor of $\alpha = n^{1/a}$ greater than $g$.
So, the worst case absolute provided gap size is
\begin{align*}
\alpha \times g = n^{1/a} \times g.
\end{align*}

Worst case recency-proportional gap size is therefore $n^{1/a}$.

Figure \ref{fig:retention-policies} includes a time-lapse of the extant record under the GSNR policy algorithm.
The distinguishing feature of the GSNR policy algorithm is it keeps recency-proportional gap sizes past the point where RPR policy would overflow a given record size bound.
This lends it to very-long duration applications.


\begin{algorithm}
\caption{Geometric Sequence $n$th Root Stratum Enumeration}
\label{alg:geom-seq-nth-root-enum-retained-ranks}
\begin{algorithmic}
    \Require{ $\texttt{n}$ -- the number of strata deposited }
    \Require{ $\texttt{p}$ -- the interspersal  }
    \Require{ $\texttt{d}$ -- the degree } 
    \Ensure{ set of retained strata }
\end{algorithmic}

\begin{algorithmic}[1]
    \State $\texttt{set} \gets {0, n - 1}$

    \For{$p = 1$ \textbf{to} $\texttt{d} + 1$}
        \State $\texttt{r} \gets n^{\frac{p}{d}}$

        \State $\texttt{k} \gets \text{largest integral power of two} \le \max \{\frac{r}{p}, 1 \}$ 

        \State $\texttt{c} \gets \max \{n - \ceil{r + \frac{r}{p}}, 0\}$
        \State $\texttt{a} \gets \texttt{c} - (\texttt{c} \bmod -\texttt{k})$

        \State $\texttt{set} \gets \texttt{set} \cup \{\texttt{a}, \texttt{a} + \texttt{k}, \texttt{a} + 2\texttt{k}, {a} + 3\texttt{k}, \dots, n\}$
    \EndFor
\end{algorithmic}
\end{algorithm}

\subsection{Curbed Recency-proportional Resolution (CRPR) Policy Algorithm}
\label{sec:curbed-recency-proportional-resolution-algo}

The curbed recency-proportional resolution (CRPR) policy algorithm exists to enhance in-practice utility of recency-proportional resolution.
This policy algorithm combines the geometric sequence $n$th root (GSNR) and recency-proportional resolution (RPR) algorithms to harness the strengths of both.
Figure \ref{fig:retention-policies} includes a time-lapse of the extant record under the CRPR policy algorithm.

The GSNR algorithm sustains best-effort recency-proportional resolution to asymptotic limits, but distributes retained observations less effectively than RPR in earlier periods.
This can result in higher realized worst-case recency-proportional gap size than necessary.
However, RPR only makes use of available storage space up to the retention density of its parameterized resolution.
Choosing a low resolution to provide support for long historical depths would cause the majority of available storage space to go unused earlier on.
Nevertheless, at even the lowest possible parameterized resolution, extant record size will eventually outgrow any size cap under RPR.

The CRPR policy algorithm delivers the best of both worlds: full, effective storage space use of high-resolution parameterized RPR policy plus the graceful, indefinite tail support of GSNR policy.
It achieves this balance by splicing these two policy algorithms together, transitioning from RPR to GSNR once the lowest-resolution RPR policy would exceed available space.
Before reaching that point of transition to GSNR, CRPR maximizes use of available space by cycling through a series of successively lower-resolution RPR parameterizations, downgrading each time usage would exceed available space.
The CRPR policy algorithm itself provides $\mathcal{O}(1)$ extant record size order of growth, and is parameterized by a desired upper bound $m$ on retained observation count.
Support is provided for $m \geq 8$.

The CRPR policy switches from RPR to GSNR at time point,
\begin{align*}
n = \left\lfloor \frac{2^{\lfloor m/3 \rfloor}}{2} \right\rfloor
\end{align*}

with the delegated-to GSNR policy algorithm permanently parameterized to degree:
\begin{align*}
a = \max \left(
  \left\lfloor \frac{m - 2}{6} \right\rfloor,
  1
\right).
\end{align*}

At preceding time points $n$, RPR policy is parameterized to resolution
\begin{align*}
r = \left(
  m \,
  \left\lfloor
  \frac{1}{\lceil \log_2(n + 1) \rceil + 1}
  \right\rfloor
  - 1
\right).
\end{align*}
Note that this resolution $r$ progressively decreases with record depth $n$.

In describing the CRPR algorithm, we will build off the properties of the RPR and GSNR algorithms established in sections \ref{sec:recency-proportional-resolution-algo} and \ref{sec:geom-seq-nth-root-algo}.
Under the CRPR, at any given point in time either the RPR or GSNR is currently active.
If RPR is active, resolution parameterization depends on record depth $n$.
Whichever algorithm is currently active enumerates retained observation time points.
Time points to delete can be calculated through set subtraction of sequential curated collection enumerations.
Because curated collection size is bounded, this procedure is $\mathcal{O}(1)$.
Asymptotic properties result solely from GSNR, as it constitutes the final destination of the CRPR stitched policy sequence.
Except when switching CRPR's active policy algorithm, we can also carry forward self-consistency assurances from the existing RPR and GSNR policy algorithms.
However, when switching CRPR's active policy algorithm, we do need to assess self-consistency.
Transitioned-to time point collections must be a subset of transitioned-from time point collections.
Self-consistency turns out to largely arise from peculiarities of stacking monotonically decreasing binary powers.

In application, the CRPR policy algorithm should be preferable to GSNR and RPR for nearly all scenarios that call for recency-proportional resolution under capped-size storage limitations.
Except when switching active policy, update implementation can be optimized by replacing the set subtraction procedure with the active policy algorithm's implementation.
It turns out such transitions exclusively occur at perfect power-of-two time points.

\section{Conclusion} \label{sec:conclusion}

In this work, we have characterized the stream curation problem --- how to maintain a temporally-representative running archive of stream data --- and provided several algorithms to solve it, each meeting different criteria.
We have systematized curatorial properties of data retention strategies, covering how many data items should be retained and how retained data items should distribute over past time.
We provide five policy algorithms that target a spectrum of size/coverage trade-offs and demonstrate procedures to enact them efficiently.
Implementations enable key optimizations to trim archive storage size overhead through efficient and computationally reducible (i.e., stateless) determination of retained contents.

Within the original context for their development, presented stream curation algorithms are key to tunability and efficiency of hereditary stratigraphy.
However, the stream curation problem generalizes beyond hereditary stratigraphy, and memory-smart optimizations provided here stand to boost the data stream mining capabilities of low-grade hardware in roles such as sensor nodes and data logging devices.

Much work remains.
We are particularly interested in exploring further adaptations to our approach to stream curation to improve efficacy of space use, efficiency of the update process, and simplicity of implementation for fixed-space contexts.
One promising direction involves adapting retention policies to directly specify a buffer index position to replace.
This would remove complications of shuffling down entries when early buffer entries are removed and, after initial population of the buffer, would ensure completely full space utilization.
Plenty remains for theoretical analysis, as well.
In particular, the extent to which stream curation policy algorithms minimize the number of records necessary to achieve their guarantees should be considered.

Ultimately, however, our interest in stream curation is application-driven.
To these ends, reference implementations of most algorithms described are provided in the public-facing \texttt{hstrat} Python library \citep{moreno2022hstrat}.
In addition to hereditary stratigraphy-specific tools, the \texttt{hstrat} library makes stream curation algorithm implementations available via a few lines of code.

\section*{Acknowledgment}

This material is based upon work supported by the Eric and Wendy Schmidt AI in Science Postdoctoral Fellowship, a Schmidt Futures program.

\putbib

\end{bibunit}

\clearpage
\newpage

\begin{bibunit}
\setcounter{secnumdepth}{3}

\printunsrtglossary[numberedsection=autolabel]

\section{Extant Record Size Order of Growth of Retention Policy Algorithms} \label{sec:extant_record_oog}
The concept of `extant record order of growth' is defined to be the upper bound of memory usage of a policy algorithm with respect to elapsed generations.
Since policy algorithms determine the strata retained into the following generation without the use of any additional data structures, the upper bound of their memory usage will be equivalent in order to the number of retained strata.
In fact, given that no intermediate data structures other than the final output array are used, we can guarantee the lower bound will also be equivalent.

As such, Extant Record Size Order of Growth proofs will set out to prove that the number of retained strata of each respective retention policy algorithm will grow with an order of $\mathcal{\theta}(f(n, k)),$ where $f$ is unique to each retention policy algorithm, $k$ is a user-defined constant and $n$ is the number of strata in the current generation.

\end{bibunit}


\begin{thebibliography}{}

\bibitem[Abdulla et~al., 2004]{abdulla2004simulation}
Abdulla, G., Critchlow, T., \& Arrighi, W. (2004).
\newblock Simulation data as data streams.
\newblock {\em ACM SIGMOD Record}, 33(1), 89--94.
\newblock \url{https://doi.org/10.1145/974121.974137}

\bibitem[Agarwal et~al., 2009]{agarwal2009faster}
Agarwal, V., Bader, D.~A., Dan, L., Liu, L.-K., Pasetto, D., Perrone, M., \&
  Petrini, F. (2009).
\newblock Faster fast: multicore acceleration of streaming financial data.
\newblock {\em Computer Science - Research and Development}, 23(3-4), 249--257.
\newblock \url{https://doi.org/10.1007/s00450-009-0093-5}

\bibitem[Aggarwal, 2009]{aggarwal2009data}
Aggarwal, C.~C. (2009).
\newblock {\em Data Streams: An Overview and Scientific Applications},
  377--397.
\newblock Springer Berlin Heidelberg.
\newblock \url{https://doi.org/10.1007/978-3-642-02788-8_14}

\bibitem[Aggarwal et~al., 2003]{aggarwal2003framework}
Aggarwal, C.~C., Yu, P.~S., Han, J., \& Wang, J. (2003).
\newblock A framework for clustering evolving data streams.
\newblock {\em Proceedings 2003 VLDB Conference}, 81--92.
\newblock \url{https://doi.org/10.1016/b978-012722442-8/50016-1}

\bibitem[Akidau et~al., 2015]{akidau2015dataflow}
Akidau, T., Bradshaw, R., Chambers, C., Chernyak, S., Fernández-Moctezuma,
  R.~J., Lax, R., McVeety, S., Mills, D., Perry, F., Schmidt, E., \& Whittle,
  S. (2015).
\newblock The dataflow model: a practical approach to balancing correctness,
  latency, and cost in massive-scale, unbounded, out-of-order data processing.
\newblock {\em Proceedings of the VLDB Endowment}, 8(12), 1792--1803.
\newblock \url{https://doi.org/10.14778/2824032.2824076}

\bibitem[Berestycki, 2009]{berestyckiRecentProgressCoalescent2009}
Berestycki, N. (2009).
\newblock Recent progress in coalescent theory.
\newblock {\em Ensaios Matemáticos}, 16(1).
\newblock \url{https://doi.org/10.21711/217504322009/em161}

\bibitem[Bormann et~al., 2014]{rfc7228}
Bormann, C., Ersue, M., \& Keränen, A. (2014).
\newblock {\em Terminology for constrained-node networks}.
\newblock Rfc 7228.
\newblock \url{https://doi.org/10.17487/rfc7228}

\bibitem[Cai et~al., 2004]{cai2004maids}
Cai, Y.~D., Clutter, D., Pape, G., Han, J., Welge, M., \& Auvil, L. (2004).
\newblock Maids: mining alarming incidents from data streams.
\newblock {\em Proceedings of the 2004 ACM SIGMOD international conference on
  Management of data}, SIGMOD/PODS04, 919--920.
\newblock \url{https://doi.org/10.1145/1007568.1007695}

\bibitem[{Cerebras Systems Inc.}, 2021]{cerebras2021wafer}
{Cerebras Systems Inc.} (2021).
\newblock Wafer-scale engine: The largest chip ever built.
\newblock Technical report, Cerebras Systems Inc.
\newblock
  \url{https://f.hubspotusercontent30.net/hubfs/8968533/WSE-2%20Datasheet.pdf}.
\newblock Accessed: 2023-07-08

\bibitem[Crochiere \& Rabiner, 1983]{crochiere1983multirate}
Crochiere, R.~E. \& Rabiner, L.~R. (1983).
\newblock {\em Multirate digital signal processing}, volume~18.
\newblock Prentice-hall Englewood Cliffs, NJ.

\bibitem[Elnahrawy, 2003]{elnahrawy2003research}
Elnahrawy, E. (2003).
\newblock Research directions in sensor data streams: solutions and challenges.
\newblock Technical Report DCIS-TR-527, Rutgers University, New Jersey.
\newblock
  \url{http://eolo.cps.unizar.es/docencia/doctorado/Articulos/DataStreams/Research%20Directions%20in%20Sensor%20Data%20Streams.%20Solutions%20and%20Challenges.pdf}.
\newblock Accessed: 2023-12-04

\bibitem[Fincham et~al., 1995]{fincham1995use}
Fincham, W.~F., Kast, A., \& Lambourn, R.~F. (1995).
\newblock The use of a high resolution accident data recorder in the field.
\newblock {\em SAE Technical Paper Series}, ANNUAL.
\newblock \url{https://doi.org/10.4271/950351}

\bibitem[Fischer et~al., 2012]{fischer2012real}
Fischer, F., Mansmann, F., \& Keim, D.~A. (2012).
\newblock Real-time visual analytics for event data streams.
\newblock {\em Proceedings of the 27th Annual ACM Symposium on Applied
  Computing}, SAC 2012.
\newblock \url{https://doi.org/10.1145/2245276.2245432}

\bibitem[Gaber et~al., 2005]{gaber2005mining}
Gaber, M.~M., Zaslavsky, A., \& Krishnaswamy, S. (2005).
\newblock Mining data streams: A review.
\newblock {\em SIGMOD Rec.}, 34(2), 18--26.
\newblock \url{https://doi.org/10.1145/1083784.1083789}

\bibitem[Graham et~al., 2012]{graham2012data}
Graham, M.~J., Djorgovski, S.~G., Mahabal, A., Donalek, C., Drake, A., \&
  Longo, G. (2012).
\newblock Data challenges of time domain astronomy.
\newblock {\em Distributed and Parallel Databases}, 30(5-6), 371--384.
\newblock \url{https://doi.org/10.1007/s10619-012-7101-7}

\bibitem[Hadiatna et~al., 2016]{hadiatna2016design}
Hadiatna, F., Hindersah, H., Yolanda, D., \& Triawan, M.~A. (2016).
\newblock Design and implementation of data logger using lossless data
  compression method for internet of things.
\newblock {\em 2016 6th International Conference on System Engineering and
  Technology (ICSET)}.
\newblock \url{https://doi.org/10.1109/icsengt.2016.7849632}

\bibitem[Han et~al., 2005]{han2005stream}
Han, J., Chen, Y., Dong, G., Pei, J., Wah, B.~W., Wang, J., \& Cai, Y.~D.
  (2005).
\newblock Stream cube: An architecture for multi-dimensional analysis of data
  streams.
\newblock {\em Distributed and Parallel Databases}, 18(2), 173--197.
\newblock \url{https://doi.org/10.1007/s10619-005-3296-1}

\bibitem[He et~al., 2010]{he2010comet}
He, B., Yang, M., Guo, Z., Chen, R., Su, B., Lin, W., \& Zhou, L. (2010).
\newblock Comet: batched stream processing for data intensive distributed
  computing.
\newblock {\em Proceedings of the 1st ACM symposium on Cloud computing}, SOCC
  '10.
\newblock \url{https://doi.org/10.1145/1807128.1807139}

\bibitem[Henzinger et~al., 1999]{henzinger1998computing}
Henzinger, M., Raghavan, P., \& Rajagopalan, S. (1999).
\newblock {\em Computing on data streams}, 107--118.
\newblock American Mathematical Society.
\newblock \url{https://doi.org/10.1090/dimacs/050/05}

\bibitem[Hill et~al., 2009]{hill2009real}
Hill, D.~J., Minsker, B.~S., \& Amir, E. (2009).
\newblock Real-time bayesian anomaly detection in streaming environmental data.
\newblock {\em Water Resources Research}, 45(4).
\newblock \url{https://doi.org/10.1029/2008wr006956}

\bibitem[Iyer et~al., 2022]{iyer2022wind}
Iyer, V., Gaensbauer, H., Daniel, T.~L., \& Gollakota, S. (2022).
\newblock Wind dispersal of battery-free wireless devices.
\newblock {\em Nature}, 603(7901), 427--433.
\newblock \url{https://doi.org/10.1038/s41586-021-04363-9}

\bibitem[Jain et~al., 2022]{jain2022survey}
Jain, S., Verma, R.~K., Pattanaik, K.~K., \& Shukla, A. (2022).
\newblock A survey on event-driven and query-driven hierarchical routing
  protocols for mobile sink-based wireless sensor networks.
\newblock {\em The Journal of Supercomputing}, 78(9), 11492--11538.
\newblock \url{https://doi.org/10.1007/s11227-022-04327-4}

\bibitem[Jiang \& Gruenwald, 2006]{jiang2006research}
Jiang, N. \& Gruenwald, L. (2006).
\newblock Research issues in data stream association rule mining.
\newblock {\em ACM SIGMOD Record}, 35(1), 14--19.
\newblock \url{https://doi.org/10.1145/1121995.1121998}

\bibitem[Johnson et~al., 2005]{johnson2005streams}
Johnson, T., Muthukrishnan, S., Spatscheck, O., \& Srivastava, D. (2005).
\newblock {\em Streams, Security and Scalability}, 1--15.
\newblock Springer Berlin Heidelberg.
\newblock \url{https://doi.org/10.1007/11535706_1}

\bibitem[Kushida et~al., 2015]{kushida2015cloud}
Kushida, K.~E., Murray, J., \& Zysman, J. (2015).
\newblock Cloud computing: From scarcity to abundance.
\newblock {\em Journal of Industry, Competition and Trade}, 15(1), 5--19.
\newblock \url{https://doi.org/10.1007/s10842-014-0188-y}

\bibitem[Lauterbach, 2021]{lauterbach2021path}
Lauterbach, G. (2021).
\newblock The path to successful wafer-scale integration: The cerebras story.
\newblock {\em IEEE Micro}, 41(6), 52--57.
\newblock \url{https://doi.org/10.1109/mm.2021.3112025}

\bibitem[Lee et~al., 2012]{lee2012modular}
Lee, Y., Kim, G., Bang, S., Kim, Y., Lee, I., Dutta, P., Sylvester, D., \&
  Blaauw, D. (2012).
\newblock A modular 1 mm$^3 $die-stacked sensing platform with low power $i^2$c
  inter-die communication and multi-modal energy harvesting.
\newblock {\em 2012 IEEE International Solid-State Circuits Conference}.
\newblock \url{https://doi.org/10.1109/isscc.2012.6177065}

\bibitem[Lin et~al., 2004]{lin2004continuously}
Lin, X., Lu, H., Xu, J., \& Yu, J. (2004).
\newblock Continuously maintaining quantile summaries of the most recent n
  elements over a data stream.
\newblock {\em Proceedings. 20th International Conference on Data Engineering},
  ICDE-04.
\newblock \url{https://doi.org/10.1109/icde.2004.1320011}

\bibitem[Mahzan, 2016]{mahzan2017design}
Mahzan, N.~N. (2016).
\newblock Design and development of an arduino based data logger for
  photovoltaic monitoring system.
\newblock {\em International Journal of Simulation: Systems, Science \&
  Technology}.
\newblock \url{https://doi.org/10.5013/ijssst.a.17.41.15}

\bibitem[Manku \& Motwani, 2002]{manku2002approximate}
Manku, G.~S. \& Motwani, R. (2002).
\newblock {\em Approximate Frequency Counts over Data Streams}, 346--357.
\newblock Elsevier.
\newblock \url{https://doi.org/10.1016/b978-155860869-6/50038-x}

\bibitem[{Microchip Technology Inc.}, 2014]{microchip2014atiny20}
{Microchip Technology Inc.} (2014).
\newblock Attiny20 8-bit avr microcontroller datasheet.
\newblock Technical report, Microchip Technology Inc., Novi, MI.
\newblock
  \url{https://ww1.microchip.com/downloads/en/devicedoc/atmel-8235-8-bit-avr-microcontroller-attiny20_datasheet.pdf}.
\newblock Accessed: 2023-07-08

\bibitem[Moreno et~al., 2022a]{moreno2022hereditary}
Moreno, M.~A., Dolson, E., \& Ofria, C. (2022a).
\newblock Hereditary stratigraphy: Genome annotations to enable phylogenetic
  inference over distributed populations.
\newblock {\em The 2022 Conference on Artificial Life}, ALIFE 2022.
\newblock \url{https://doi.org/10.1162/isal_a_00550}

\bibitem[Moreno et~al., 2022b]{moreno2022hstrat}
Moreno, M.~A., Dolson, E., \& Ofria, C. (2022b).
\newblock hstrat: a python package for phylogenetic inference on distributed
  digital evolution populations.
\newblock {\em Journal of Open Source Software}, 7(80), 4866.
\newblock \url{https://doi.org/10.21105/joss.04866}

\bibitem[Moreno et~al., 2023]{moreno2023toward}
Moreno, M.~A., Dolson, E., \& Rodriguez-Papa, S. (2023).
\newblock Toward phylogenetic inference of evolutionary dynamics at scale.
\newblock {\em The 2023 Conference on Artificial Life}, ALIFE 2023.
\newblock \url{https://doi.org/10.1162/isal_a_00694}

\bibitem[Morgenstern et~al., 2021]{morgenstern2021unparalleled}
Morgenstern, L., Kabadshow, I., \& Werner, M. (2021).
\newblock Unparalleled parallelism? cpu \& gpu architecture trends and their
  implications for hpc software.
\newblock {\em Tagungsband des FG-BS Frühjahrstreffens 2021}.
\newblock \url{https://doi.org/10.18420/fgbs2021f-02}

\bibitem[Muthukrishnan, 2005]{muthukrishnan2005data}
Muthukrishnan, S. (2005).
\newblock Data streams: Algorithms and applications.
\newblock {\em Foundations and Trends® in Theoretical Computer Science}, 1(2),
  117--236.
\newblock \url{https://doi.org/10.1561/0400000002}

\bibitem[Nordborg, 2019]{nordborgCoalescentTheory2019}
Nordborg, M. (2019).
\newblock Coalescent theory.
\newblock {\em Handbook of Statistical Genomics}, 145--30. John Wiley \& Sons,
  Ltd.
\newblock \url{https://doi.org/10.1002/9781119487845.ch5}

\bibitem[Ojo et~al., 2018]{ojo2018review}
Ojo, M.~O., Giordano, S., Procissi, G., \& Seitanidis, I.~N. (2018).
\newblock A review of low-end, middle-end, and high-end iot devices.
\newblock {\em IEEE Access}, 6, 70528--70554.
\newblock \url{https://doi.org/10.1109/access.2018.2879615}

\bibitem[Palpanas et~al., 2004]{palpanas2004online}
Palpanas, T., Vlachos, M., Keogh, E., Gunopulos, D., \& Truppel, W. (2004).
\newblock Online amnesic approximation of streaming time series.
\newblock {\em Proceedings. 20th International Conference on Data Engineering},
  ICDE-04.
\newblock \url{https://doi.org/10.1109/icde.2004.1320009}

\bibitem[Rajeshwari \& Babu, 2016]{rajeshwari2016real}
Rajeshwari, U. \& Babu, B.~S. (2016).
\newblock Real-time credit card fraud detection using streaming analytics.
\newblock {\em 2016 2nd International Conference on Applied and Theoretical
  Computing and Communication Technology (iCATccT)}.
\newblock \url{https://doi.org/10.1109/icatcct.2016.7912039}

\bibitem[Saunders, 1989]{saunders1989portable}
Saunders, D. (1989).
\newblock A portable instrument for the logging of light levels.
\newblock {\em The Conservator}, 13(1), 25--30.
\newblock \url{https://doi.org/10.1080/01410096.1989.9995044}

\bibitem[Schützel et~al., 2014]{schutzel2014stream}
Schützel, J., Meyer, H., \& Uhrmacher, A.~M. (2014).
\newblock A stream-based architecture for the management and on-line analysis
  of unbounded amounts of simulation data.
\newblock {\em Proceedings of the 2nd ACM SIGSIM Conference on Principles of
  Advanced Discrete Simulation}, SIGSIM-PADS '14.
\newblock \url{https://doi.org/10.1145/2601381.2601399}

\bibitem[Silva et~al., 2013]{silva2013data}
Silva, J.~A., Faria, E.~R., Barros, R.~C., Hruschka, E.~R., Carvalho, A. C. P.
  L. F.~d., \& Gama, J.~a. (2013).
\newblock Data stream clustering: A survey.
\newblock {\em ACM Comput. Surv.}, 46(1).
\newblock \url{https://doi.org/10.1145/2522968.2522981}

\bibitem[Sutter et~al., 2005]{sutter2005free}
Sutter, H. et~al. (2005).
\newblock The free lunch is over: A fundamental turn toward concurrency in
  software.
\newblock {\em Dr. Dobb's journal}, 30(3), 202--210.
\newblock \url{http://www.gotw.ca/publications/concurrency-ddj.htm}.
\newblock Accessed: 2023-12-04

\bibitem[Warneke et~al., 2001]{warneke2001smart}
Warneke, B., Last, M., Liebowitz, B., \& Pister, K. (2001).
\newblock Smart dust: communicating with a cubic-millimeter computer.
\newblock {\em Computer}, 34(1), 44--51.
\newblock \url{https://doi.org/10.1109/2.895117}

\bibitem[Zhao \& Zhang, 2006]{zhao2005generalized}
Zhao, Y. \& Zhang, S. (2006).
\newblock Generalized dimension-reduction framework for recent-biased time
  series analysis.
\newblock {\em IEEE Transactions on Knowledge and Data Engineering}, 18(2),
  231--244.
\newblock \url{https://doi.org/10.1109/tkde.2006.30}

\end{thebibliography}


\begin{thebibliography}{}

\end{thebibliography}
\end{document}